\newcommand{\shorten}[1]{}
\newtheorem{proposition}{Proposition}
\newtheorem{theorem}{Theorem}
\newcommand{\signed}%
    {{\unskip\nobreak\hfill\penalty50
      \hskip2em\hbox{}\nobreak\hfil $\blacksquare$
      \parfillskip=0pt \finalhyphendemerits=0 \par}}
\newenvironment{proof}[1]
    {
    \bf{Proof:}\rm{\noindent{#1 }}\ignorespaces
    }
    {\signed\addvspace\medskipamount}
\begin{document}

\title{General Sub-packetized Access-Optimal Regenerating Codes}

%
\author{\authorblockN{
Katina Kralevska, Danilo Gligoroski, Harald {\O}verby}

}


\maketitle

\begin{abstract}
This paper presents a novel construction of $(n,k,d=n-1)$ access-optimal regenerating codes for an arbitrary sub-packetization level $\alpha$ for exact repair of any systematic node. We refer to these codes as general sub-packetized because we provide an algorithm for constructing codes for any $\alpha$ less than or equal to $r^{\lceil \frac{k}{r} \rceil}$ where $\frac{k}{r}$ is not necessarily an integer. This leads to a flexible construction of codes for different code rates compared to existing approaches. We derive the lower and the upper bound of the repair bandwidth. The repair bandwidth depends on the code parameters and $\alpha$. The repair process of a failed systematic node is linear and highly parallelized, which means that a set of $\lceil \frac{\alpha}{r} \rceil$ symbols is independently repaired first and used along with the accessed data from other nodes to recover the remaining symbols.

{\bfseries {\textit{Index Terms}}- Minimum storage regenerating codes, sub-packetization, access-optimal}

\end{abstract}


%
\IEEEpeerreviewmaketitle

\section{Introduction} \label{intro}

Erasure coding is becoming an attractive technique for data protection since it offers the same level of reliability with significantly less storage overhead compared to replication \cite{Weatherspoon:2002:ECV:646334.687814}.
Apart from the reliability and the storage overhead, there are other desirable features in a distributed storage system such as low repair bandwidth and access-optimality. Repair bandwidth is the amount of transferred data during a repair process. Access-optimality is achieved when the amount of accessed and transferred data during the repair process is equal. 


Dimakis et al. introduced regenerating codes that significantly reduce the repair bandwidth \cite{5550492}.
Under an $(n, k, d)$ regenerating code, a file of $M$ symbols from a finite field $\mathbf{F}_q$ is divided into $k$ fragments, each of size $\alpha=\frac{M}{k}$ symbols, which are further encoded into $n$ fragments using an $(n, k)$ MDS (Maximum-Distance Separable) code.
The parameter $\alpha$, termed as a sub-packetization level of the code, represents the minimum dimension over which operations are performed.
The data from a failed node is recovered by transferring $\beta$ symbols from each $d$ non-failed nodes. Thus, the repair bandwidth $\gamma$ is equal to $d\beta$ where $\alpha \leq d\beta \ll M$.
Dimakis et al. \cite{5550492} showed the existence of minimum storage regenerating (MSR) codes that attain the minimum storage point of the optimal tradeoff curve between the storage and the repair bandwidth, i.e.,
\begin{equation}
(\alpha_{MSR}, \gamma_{MSR}^{min})=(\frac{M}{k}, \frac{M}{k} \frac{n-1}{n-k}).
\label{optimalBW}
\end{equation}
The repair bandwidth is minimized when all $d=n-1$ non-failed nodes transmit a fraction of ${1}/{r}$ of the stored data.



Several exact repair MSR codes, which are characterized by the repaired data being exactly the same as the lost data, have been suggested.
Tamo et al. proposed optimal MSR codes for $\alpha$ equal to $r^k$ known as zigzag codes \cite{6352912}.
Furthermore, they showed that $\alpha$ of access-optimal MSR codes for repair of any systematic node is $r^\frac{k}{r}$ \cite{6283042}. The codes presented in \cite{6190343} and \cite{7084873} meet this condition.
An essential condition for the code construction in \cite{7084873} is that $m=\frac{k}{r}$ has to be an integer $m \geq 1$ where $k$ is set to $rm$ and $\alpha$ to $r^m$.
Wang et al. constructed codes that optimally repair any systematic or parity node for $\alpha$ equal to $r^{k+1}$ \cite{6120327}.
High-rate MSR codes with a polynomial sub-packetization level are proposed in \cite{7282816}.
However, our work focuses only on code constructions for optimal repair of any systematic node.

MSR codes are optimal in terms of storage, reliability and repair bandwidth, but not I/O.
Implementing MSR codes with a sub-packetization level of $r^\frac{k}{r}$ may not be practical for storage systems that serve applications with a large number of user requests or perform intensive computations.
Thus, having an algorithm for constructing MSR codes for any combination of $n$, $k$ and $\alpha$ that are simultaneously optimal in terms of storage, reliability, repair bandwidth and I/O is an important problem that is solved in this work.

\textbf{Our Contribution:}
This paper presents a novel construction of $(n, k, d=n-1)$ access-optimal regenerating codes for an arbitrary sub-packetization level for exact repair of any systematic node.
The codes have the following properties: 1. MDS; 2. Systematic; 3. Flexible sub-packetization level; 4. Minimum repair bandwidth for every $\alpha$ including the lower bound (\ref{optimalBW}) when $\alpha$ is $r^{\lceil \frac{k}{r} \rceil}$; 5. Access-optimality; 6. Fast decoding.
To the best of the authors' knowledge, these are the first code constructions for an arbitrary $\alpha$.
Motivated by the code construction in \cite{7084873}, we construct general codes where $\frac{k}{r}$ does not need to be an integer and $\alpha$ is not exclusively equal to $r^{\frac{k}{r}}$.
For instance, the code $(14, 10)$ that is deployed in the data-warehouse cluster of Facebook  \cite{Rashmi:2014:HGF:2619239.2626325} is out of the scope of applicability with the current proposals in \cite{7084873,6190343,6283042}, because $\frac{k}{r}=2.5$ is a non-integer.
However, the presented algorithm constructs an $(14, 10, 13)$ code that reduces the repair bandwidth for any systematic node by 67.5\% when $\alpha$ is $r^{\lceil \frac{k}{r} \rceil}=64$ compared to an $(14, 10)$ RS code.
The repair process is linear and highly parallelized. 
\vspace{-0.1cm}

\section{A General $(n, k, d=n-1)$ Code Construction} \label{general}

Consider a file of size $M = k \alpha$ symbols from a finite field $\mathbf{F}_q$ stored in $k$ systematic nodes $d_j$ of capacity $\alpha$ symbols.

We define a systematic MDS code in the following way: The basic data structure component is an index array of size $\alpha \times k$ where $\alpha \leq r^{\lceil \frac{k}{r} \rceil}$ and $n=k+r$, $P = ((i,j))_{\alpha \times k}$.
We use $r$ such index arrays $P_1,\ldots,P_r$.
The elements $p_{i,1}$, $i=1,\ldots,\alpha$, in $p_1$ are a linear combination only of the symbols with indexes present in the rows of $P_1$.
In the initialization phase, additional $\lceil \frac{k}{r} \rceil$ columns with pairs $(0, 0)$ are added to $P_2, \ldots, P_r$. 
The goal of the algorithm is to replace those zero pairs with concrete $(i, j)$ pairs so that the code is access-optimal for a given sub-packetization level $\alpha$. The value of $\alpha$ determines two phases of the algorithm. In the first phase, the indexes $(i,j)$ that replace the $(0, 0)$ pairs are chosen such that both Condition 1 and Condition 2 are satisfied (defined further in this section). The first phase starts with a granulation level parameter called $run$ that is initialized with the value $\lceil \frac{\alpha}{r}\rceil$. This parameter affects how the indexes $(i, j)$ are chosen and with every round the granulation level decreases by a factor $r$. Once the granulation level becomes equal to 1 and there are still $(0, 0)$ pairs that have to get some value $(i, j)$, the second phase starts where the remaining indexes are chosen such that only Condition 2 is satisfied. 

A high level description of the proposed algorithm is given in Alg. \ref{HighLvl}, while a detailed one in Alg. \ref{Constr}.
\begin{algorithm}
	\small
	\caption{High level description of an algorithm for generating general sub-packetized, access-optimal regenerating codes  
	}
	\label{HighLvl}
	\begin{algorithmic}[1]
		\State Initialize the index arrays $P_1, \ldots, P_r$;
		\State \# Phase 1
		\State Set the granulation level $run \leftarrow \lceil \frac{\alpha}{r}\rceil$
		\Repeat
			\State \parbox[t]{8cm}{Replace $(0, 0)$ pairs with indexes $(i, j)$ such that both Condition 1 and Condition 2 are satisfied;  \vspace{0.1cm}}
			\State Decrease the granulation level $run$ by a factor $r$.
		\Until{the granulation level $run > 1$}
		\State \# Phase 2
		\State If there are still $(0, 0)$ indexes that have to get some value $(i, j)$, choose them such that only Condition 2 is satisfied;
		\State Return the index arrays $P_1, \ldots, P_r$;
	\end{algorithmic}
\end{algorithm}

Once the index arrays $P_1, \ldots, P_r$ are determined, the symbols $p_{i,l}$ in the parity nodes, $1 \leq i \leq \alpha$ and $1 \leq l \leq r$, are generated as a combination of the elements $a_{j_1,j_2}$ where the pair $(j_1, j_2)$ is in the $i$-th row of the index array $P_l$, i.e.,
\begin{equation}\label{LinEquations}
p_{i,l}=\sum c_{l,i,j} a_{j_1,j_2}.
\end{equation}
The linear relations have to guarantee an MDS code, i.e., to guarantee that the entire information can be recovered from any $k$ nodes (systematic or parity).
We use the following terms and variables:

\begin{algorithm}
	\small
\caption{Algorithm to generate the index arrays
\newline
\textbf{Input:} $n, k, \alpha$;
\newline
\textbf{Output:} Index arrays $P_1, \ldots, P_r$.}
\label{Constr}
\begin{algorithmic}[1]
\State{\textbf{Initialization:} $P_1, \ldots, P_r$ are initialized as index arrays $P = ((i,j))_{\alpha \times k}$;}
\State{Append $\lceil \frac{k}{r}\rceil$ columns to $P_2, \ldots, P_r$ all initialized to $(0, 0)$;}
\State Set $portion \leftarrow \lceil \frac{\alpha}{r}\rceil$;
\State Set $ValidPartitions \leftarrow \emptyset$;
\State Set $j \leftarrow 0$;
\State \# Phase 1
\Repeat
	\State Set $j \leftarrow j+1$;
	\State Set $\nu \leftarrow \lceil \frac{j}{r}\rceil$;
	\State Set $run \leftarrow \lceil \frac{\alpha}{r^\nu}\rceil$;
	\State Set $step \leftarrow \lceil \frac{\alpha}{r}\rceil-run$;
	\State \parbox[t]{8cm}{$\mathcal{D}_{d_j} =$ $ValidPartitioning(ValidPartitions$, $k$, $r$, $portion$, $run$, $step$, $J_{\nu})$;  \vspace{0.1cm}}
	\State \parbox[t]{8cm}{Set $ValidPartitions=ValidPartitions \cup \mathcal{D}_{d_j}$; \vspace{0.1cm}}
	\State \parbox[t]{8cm}{Determine one $D_{\rho,d_j} \in \mathcal{D}_{d_j}$ such that its elements correspond to row indexes in the $(k+\nu)$-th column in one of the arrays $P_2, \ldots, P_r$, that are all zero pairs $(0, 0)$; \vspace{0.1cm}}
	\State \parbox[t]{8cm}{The indexes in $D_{\rho,d_j}$ are the row positions where the pairs $(i,j)$ with indexes $i \in \mathcal{D} \setminus D_{\rho,d_j}$ are assigned in the $(k+\nu)$-th column of $P_2, \ldots, P_r$; \vspace{0.1cm}}
\Until{$(run>1)\ \ \mbox{AND} \ \ (j \ne 0 \mod{r}) $}
\State \# Phase 2
\While{$j < k$}
	\State Set $j \leftarrow j+1$;
	\State Set $\nu \leftarrow \lceil \frac{j}{r}\rceil$;
	\State Set $run \leftarrow 0$;
	\State \parbox[t]{8cm}{$\mathcal{D}_{d_j} =$ $ValidPartitioning(ValidPartitions$, $k$, $r$, $portion$, $run$, $step$, $J_{\nu})$;  \vspace{0.1cm}}
	\State \parbox[t]{8cm}{Set $ValidPartitions=ValidPartitions \cup \mathcal{D}_{d_j}$; \vspace{0.1cm}}
	\State \parbox[t]{8cm}{Determine one $D_{\rho,d_j} \in \mathcal{D}_{d_j}$ such that its elements correspond to row indexes in the $(k+\nu)$-th column in one of the arrays $P_2, \ldots, P_r$, that are all zero pairs $(0, 0)$; \vspace{0.1cm}}
	\State \parbox[t]{8cm}{The indexes in $D_{\rho,d_j}$ are the row positions where the pairs $(i,j)$ with indexes $i \in \mathcal{D} \setminus D_{\rho,d_j}$ are assigned in the $(k+\nu)$-th column of $P_2, \ldots, P_r$; \vspace{0.1cm}}
\EndWhile
\State Return $P_1, \ldots, P_r$.
\end{algorithmic}
\end{algorithm}

\begin{itemize}
 \item The set $Nodes=\{d_1, \ldots, d_k\}$ of $k$ systematic nodes is partitioned in $\lceil \frac{k}{r}\rceil$ disjunctive subsets $J_1, J_2, \ldots, J_{\lceil \frac{k}{r}\rceil}$ where $|J_\nu|=r$ (if $r$ does not divide $k$ then $J_{\lceil \frac{k}{r}\rceil}$ has $k \mod{r}$ elements) and $Nodes=\cup_{\nu=1}^{\lceil \frac{k}{r}\rceil}J_\nu$.
In general, this partitioning can be any random permutation of $k$ nodes. Without loss of generality we use the natural ordering as follows:
$J_1=\{d_1,\ldots,d_r \} $, $J_2=\{d_{r+1},\ldots,d_{2r} \} $, $\ldots$ , $J_{\lceil \frac{k}{r}\rceil} = \{d_{ \lfloor \frac{k}{r} \rfloor \times r + 1},\ldots,d_{k} \} $. 	

	\item Each node $d_j$ consists of an indexed set of $\alpha$ symbols $\{a_{1,j},a_{2,j},\ldots,a_{\alpha,j}\}$.
	
	\item $portion = \lceil \frac{\alpha}{r}\rceil$: The set of all symbols in $d_j$ is partitioned in disjunctive subsets where at least one subset has $portion$ number of elements.
	\item $run=\lceil \frac{\alpha}{r^\nu}\rceil$, for values of $\nu \in \{1,\ldots,\lceil \frac{k}{r}\rceil\}$.
	\item $step=\lceil \frac{\alpha}{r}\rceil-run$: For the subsequent $(k+\nu)$-th column, where $\nu \in \{1,\ldots,\lceil \frac{k}{r}\rceil\}$, the scheduling of the indexes corresponding to the nodes in $J_\nu$ is done in subsets of indexes from a valid partitioning.
\item \emph{A valid partitioning} $\mathcal{D}_{d_j} = \{D_{1,d_j},\ldots,D_{r,d_j}\}$ of a set of indexes $D=\{1,\ldots,\alpha\}$, where the $i-$th symbol in $d_j$ is indexed by $i$ in $D$, is a partitioning in $r$ disjunctive subsets $D_{d_j} = \cup_{\rho=1}^{r}D_{\rho,d_j}$. If $r$ divides $\alpha$, then the valid partitioning for all nodes in $J_\nu$ is equal. If $r$ does not divide $\alpha$, then the valid partitioning has to contain at least one subset $D_{\rho,d_j}$ with $portion$ elements that correspond to the row indexes in the $(k+\nu)$-th column in one of the arrays $P_2, \ldots, P_r$ that are all zero pairs.
	




	
	\item \emph{Condition 1}: 
At least one subset $D_{\rho,d_j}$ has $portion$ elements with runs of $run$ consecutive elements separated with a distance between the indexes equal to $step$. The elements of that subset correspond to the row indexes in the $(k+\nu)$-th column in one of the arrays $P_2, \ldots, P_r$ that are all zero pairs. The distance between two elements in one node is computed in a cyclical manner such that the distance between the elements $a_{\alpha-1}$ and $a_2$ is 2.



	\item \emph{Condition 2}: A necessary condition for the valid partitioning to achieve the lowest possible repair bandwidth is $D_{d_{j_1}} = D_{d_{j_2}}$ for all $d_{j_1}$ and $d_{j_2}$ in $J_\nu$ and $D_{\rho,d_{j_1}}\neq D_{\rho,d_{j_2}}$ for all $d_{j_1}$ and $d_{j_2}$ systematic nodes in the system.
	If $portion$ divides $\alpha$, then $D_{\rho,d_j}$ for all $d_j$ in $J_\nu$ are disjunctive, i.e., $D = \cup_{j=1}^{r}D_{\rho,d_j}=\{1, \ldots, \alpha\}$.


\end{itemize}
\begin{algorithm}
	\small
	\caption{$ValidPartitioning$
		\newline
		\textbf{Input}:$ValidPartitions, k, r, portion, run, step, J_{\nu}$;
		\newline
		\textbf{Output}: $\mathcal{D}_{d_j} = \{D_{1,d_j},\ldots,D_{r,d_j}\}$.}
	\label{Valid}	
	\begin{algorithmic}[1]
		\State Set $D = \{1,\ldots,\alpha\}$;
		\If{$run \ne 0$}
		\State Find $\mathcal{D}_{d_j}$ that satisfies Condition 1 and Condition 2;
		\Else
		\State Find $\mathcal{D}_{d_j}$ that satisfies Condition 2;
		\EndIf
		\State Return $\mathcal{D}_{d_j}$;
	\end{algorithmic}
\end{algorithm}


Alg. \ref{AlgRepair} shows the repair of a systematic node where the systematic and the parity nodes are global variables.
A set of $\lceil \frac{\alpha}{r} \rceil$ symbols is accessed and transferred from each $n-1$ non-failed nodes. If $\alpha \ne r^{\lceil \frac{k}{r} \rceil}$, then additional elements may be required as described in Step 4.
Note that a specific element is transferred just once and stored in a buffer. For every subsequent use of that element, the element is read from the buffer and further transfer operation is not required.
The repair process is highly parallelized, because a set of $\lceil \frac{\alpha}{r} \rceil$ symbols is independently and in parallel repaired in Step 2 and then the remaining symbols are recovered in parallel in Step 5.
\vspace{-0.2cm}
\begin{algorithm}
	\small
\caption{Repair of a systematic node $d_l$
\newline
\textbf{Input:} $l$;
\newline
\textbf{Output:} $d_l$.}
\label{AlgRepair}
\begin{algorithmic}[1]
	\State Access and transfer $(k-1) \lceil \frac{\alpha}{r}\rceil$ elements $a_{i,j}$ from all $k-1$ non-failed systematic nodes and $\lceil \frac{\alpha}{r}\rceil$ elements $p_{i,1}$ from $p_1$ where $i \in D_{\rho,d_l}$;
	\State Repair $a_{i,l}$ where $i \in D_{\rho,d_l}$
	\State Access and transfer $(r-1)\lceil \frac{\alpha}{r}\rceil$ elements $p_{i,j}$ from $p_2, \ldots, p_{r}$ where $i \in D_{\rho,d_l}$;
	\State Access and transfer from the systematic nodes the elements $a_{i,j}$ listed in the $i-$th row of the arrays $P_2, \ldots, P_r$ where $i \in D_{\rho,d_l}$ that have not been read in Step 1;
		\State Repair $a_{i,l}$ where $i \in \mathcal{D} \setminus D_{\rho,d_l}$;
\end{algorithmic}
\end{algorithm}

\begin{proposition} \label{bw}
The repair bandwidth for a single systematic node $\gamma$ is bounded between the following lower and upper bounds:
\begin{equation}
\label{LowerUpperBounds}
\frac{n-1}{r} \leq \gamma \leq \frac{(n-1)}{\alpha} \lceil \frac{\alpha}{r} \rceil + \frac{(r-1)}{\alpha} \lceil \frac{\alpha}{r} \rceil
\lceil \frac{k}{r} \rceil.
\end{equation}
\end{proposition}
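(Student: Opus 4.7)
The plan is to interpret $\gamma$ as the normalized repair bandwidth, i.e., the total number of symbols transferred during the repair of one systematic node divided by the per-node capacity $\alpha$; with this normalization, the two ends of (\ref{LowerUpperBounds}) need to be established.

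For the lower bound, I would invoke the cut-set bound of Dimakis et al.\ \cite{5550492} recalled in (\ref{optimalBW}). At the minimum-storage point $\alpha=M/k$, the minimum repair bandwidth is $\frac{M}{k}\cdot\frac{n-1}{n-k}$; dividing by $\alpha$ gives $\gamma\geq\frac{n-1}{n-k}=\frac{n-1}{r}$. Since the proposed construction is an $(n,k,d=n-1)$ MSR-type code with per-node storage $\alpha$, this cut-set bound applies verbatim and yields the left-hand side of (\ref{LowerUpperBounds}).

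For the upper bound, the plan is to do a careful accounting of the symbols transferred by Alg.~\ref{AlgRepair} when a node $d_l$ fails. Step~1 accesses $\lceil\alpha/r\rceil$ symbols from each of the $k-1$ surviving systematic nodes and from $p_1$, contributing $k\lceil\alpha/r\rceil$ symbols; Step~3 accesses $\lceil\alpha/r\rceil$ symbols from each of $p_2,\ldots,p_r$, contributing $(r-1)\lceil\alpha/r\rceil$ symbols. Summed, Steps~1 and 3 deliver $(n-1)\lceil\alpha/r\rceil$ symbols, which after division by $\alpha$ is exactly the first term of the upper bound. What remains is to bound the extra traffic in Step~4. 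I would argue that for every row $i\in D_{\rho,d_l}$ (of which there are $\lceil\alpha/r\rceil$) and every array $P_2,\ldots,P_r$ (of which there are $r-1$), the first $k$ columns of row $i$ already point to symbols $a_{i,\cdot}$ fetched in Step~1 from the surviving systematic nodes, so the only entries in row $i$ that can contribute \emph{new} symbols are those in the $\lceil k/r\rceil$ columns appended during the initialization of Alg.~\ref{Constr}. Since each such entry contributes at most one symbol, Step~4 transfers at most $(r-1)\lceil\alpha/r\rceil\lceil k/r\rceil$ additional symbols. Dividing the total by $\alpha$ produces the right-hand side of (\ref{LowerUpperBounds}).

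The hard part will be making the Step~4 accounting airtight. Concretely, one must verify that for $i\in D_{\rho,d_l}$ each pair $(i,j)$ sitting in the $j$-th column of $P_2,\ldots,P_r$ for $j\leq k$ genuinely refers to $a_{i,j}$, which in turn was transferred in Step~1 from the node $d_j\neq d_l$; this relies on the initialization $P=((i,j))_{\alpha\times k}$ together with the fact that Alg.~\ref{Constr} only touches the appended $\lceil k/r\rceil$ columns. Once this is pinned down, the per-row cap of $\lceil k/r\rceil$ (one symbol per appended column) is immediate, the buffering remark following Alg.~\ref{AlgRepair} rules out double counting across the three steps, and the upper bound drops out by summing the three contributions and normalizing by $\alpha$.
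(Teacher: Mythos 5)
Your upper-bound accounting is exactly the paper's: Steps 1 and 3 of Alg.~\ref{AlgRepair} contribute $(n-1)\lceil \frac{\alpha}{r}\rceil$ symbols, and Step 4 is bounded by one symbol per entry of the $\lceil \frac{k}{r}\rceil$ appended columns over the $\lceil \frac{\alpha}{r}\rceil$ accessed rows of each of the $r-1$ arrays $P_2,\ldots,P_r$. The paper simply asserts this worst case (``we assume that we read all elements $a_{i,j}$ from the extra $\lceil \frac{k}{r}\rceil$ columns''), whereas you additionally justify why the first $k$ columns generate no new traffic (their entries $(i,j)$ with $i\in D_{\rho,d_l}$ were already fetched in Step 1 or are being repaired), which is a welcome tightening of the same argument rather than a departure from it. The lower bound is where you genuinely diverge: the paper never invokes the cut-set bound, but instead observes that Steps 1 and 3 alone already force the algorithm to read $(n-1)\lceil \frac{\alpha}{r}\rceil$ symbols, i.e., $\gamma \geq \frac{n-1}{\alpha}\lceil \frac{\alpha}{r}\rceil \geq \frac{n-1}{r}$ --- a purely internal count of the scheme's minimum transfer. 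Your appeal to (\ref{optimalBW}) is equally valid and arguably more meaningful, since it shows $\frac{n-1}{r}$ is a floor for \emph{any} repair scheme of an $(n,k)$ MDS code with $d=n-1$ helpers, not just this algorithm; the paper's count has the (unexploited) advantage of yielding the slightly stronger statement $\gamma \geq \frac{n-1}{\alpha}\lceil \frac{\alpha}{r}\rceil$, which is strict when $r \nmid \alpha$. Either route establishes the proposition.
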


\begin{proof}
Note that we read in total $k \lceil \frac{\alpha}{r}\rceil$ elements in Step 1 of Alg. \ref{AlgRepair}. Additionally, $(r-1)\lceil \frac{\alpha}{r}\rceil$ elements are read in Step 3. Assuming that we do not read more elements in Step 4, we determine the lower bound as $k \lceil \frac{\alpha}{r}\rceil + (r-1)\lceil \frac{\alpha}{r}\rceil = (n-1)\lceil \frac{\alpha}{r}\rceil$ elements, i.e., the lower bound is $\frac{(n-1)}{\alpha} \lceil \frac{\alpha}{r}\rceil$ (since every element has a size of $\frac{1}{\alpha}$).
To derive the upper bound, we assume that we read all elements $a_{i,j}$ from the extra $\lceil \frac{k}{r}\rceil $ columns of the arrays $P_2, \ldots, P_r$ in Step 4. Thus, the upper bound is $\frac{(n-1)}{\alpha} \lceil \frac{\alpha}{r} \rceil + \frac{(r-1)}{\alpha} \lceil \frac{\alpha}{r} \rceil
\lceil \frac{k}{r} \rceil$.
\end{proof}


The optimality of the proposed code construction is captured in the following Proposition.
\begin{proposition} \label{key}
If $r$ divides $\alpha$, then the indexes $(i, j)$ of the elements $a_{i,j}$ where $i \in \mathcal{D} \setminus D_{\rho,d_j}$ for each group of $r$ systematic nodes are scheduled in one of the $\lceil \frac{k}{r} \rceil$ additional columns in the index arrays $P_2,\ldots,P_r$.
\end{proposition}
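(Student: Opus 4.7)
The plan is to prove this by a double-counting argument combined with a feasibility check driven by the two valid-partitioning conditions. First I would fix a group $J_\nu$ of $r$ systematic nodes for some $\nu \in \{1, \ldots, \lceil \frac{k}{r} \rceil\}$ and restrict attention to the appended column of index $k+\nu$ across the arrays $P_2, \ldots, P_r$. Since $r$ divides $\alpha$, we have $portion = \frac{\alpha}{r}$, so for each node $d_j \in J_\nu$ the set $\mathcal{D} \setminus D_{\rho, d_j}$ has cardinality $\alpha - \frac{\alpha}{r} = \frac{\alpha(r-1)}{r}$. Summing over the $r$ nodes of $J_\nu$ produces exactly $\alpha(r-1)$ pairs $(i, j)$ that must be scheduled in this column.

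Next I would count capacity. The $(k+\nu)$-th column appears in $r-1$ arrays (namely $P_2, \ldots, P_r$), each with $\alpha$ rows, providing $\alpha(r-1)$ slots in total. Demand and capacity match exactly, which is the structural backbone of the claim.

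Then I would verify feasibility, i.e., that the algorithm can in fact fit these pairs without conflict. Under $r \mid \alpha$, Condition 2 forces the $r$ subsets $D_{\rho, d_j}$ selected for the $r$ distinct nodes of $J_\nu$ at any fixed $\rho$ to be pairwise disjoint and to union to $\{1, \ldots, \alpha\}$. As the algorithm processes the nodes of $J_\nu$ one after another, Condition 1 demands choosing a $P_l$ whose rows indexed by $D_{\rho, d_j}$ still carry only zero pairs; since each processed node consumes a disjoint block of $\frac{\alpha}{r}$ rows inside one of the $r-1$ arrays, a short inductive argument on the remaining zero-pair rows shows such a $P_l$ always exists.

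The main obstacle I anticipate is precisely this inductive feasibility step: bookkeeping which row-blocks in each $P_l$ remain zero-pair as the algorithm iterates, and showing that the algorithm's freedom to pick among $P_2, \ldots, P_r$ is sufficient given the disjointness imposed by Condition 2. Once feasibility is established, the exact matching of counts closes the argument within the group $J_\nu$, and applying the same reasoning to each of the $\lceil \frac{k}{r} \rceil$ groups independently yields the full claim.
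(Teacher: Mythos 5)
The paper states this proposition without any proof at all (it moves directly on to the MDS theorem), so there is no authorial argument to compare yours against; judged on its own, your double-counting argument is essentially correct and supplies the missing justification. The demand count ($r$ nodes each contributing $\alpha - \frac{\alpha}{r} = \frac{\alpha(r-1)}{r}$ pairs, hence $\alpha(r-1)$ per group $J_\nu$) and the capacity count ($r-1$ arrays times $\alpha$ rows of the $(k+\nu)$-th column) do match exactly, and feasibility does follow from Condition~2, which under $portion=\frac{\alpha}{r}$ forces the sets $D_{\rho,d_j}$ of the $r$ nodes of $J_\nu$ to be pairwise disjoint and to cover $\{1,\ldots,\alpha\}$. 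One detail to correct: a processed node does not consume its $\frac{\alpha}{r}$ rows ``inside one of the $r-1$ arrays''; its $\frac{\alpha(r-1)}{r}$ elements are spread one per array per row over the rows indexed by $D_{\rho,d_j}$ in \emph{every} one of $P_2,\ldots,P_r$, so after processing $d_j$ those rows of the $(k+\nu)$-th column are exhausted in all $r-1$ arrays simultaneously. Your induction therefore closes not because some array still has room in those rows, but because the next node's row-set $D_{\rho,d_{j'}}$ is disjoint from all previously consumed row-sets and hence still all-zero in every (and in particular in one of) the arrays, as the algorithm's step requires. Finally, the exact equality of demand and capacity only holds for full groups; when $r$ does not divide $k$ the last group $J_{\lceil \frac{k}{r}\rceil}$ has $k \bmod r$ nodes, demand is strictly below capacity, and the scheduling is a fortiori possible, so the claim still holds there without the tight count.
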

Next we show that there always exists a set of non-zero coefficients from $\mathbf{F}_q$ in the linear combinations so that the code is MDS.
We adapt Theorem 4.1 from \cite{7084873} as follows:
\begin{theorem}\label{MDS}
There exists a choice of non-zero coefficients $c_{l,i,j}$ where $l=1, \ldots, r$, $i=1, \ldots, \alpha$ and $j=1, \ldots, k$ from $\mathbf{F}_q$ such that the code is MDS if $q \geq \binom {n} {k} r \alpha$.
\end{theorem}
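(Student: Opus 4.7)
The plan is to follow the strategy of \cite{7084873}, Theorem 4.1: treat the coefficients $c_{l,i,j}$ as indeterminates, encode each ``reconstruct from any $k$ nodes'' requirement as a polynomial non-vanishing condition on these indeterminates, and then apply a Schwartz--Zippel style argument over a sufficiently large $\mathbf{F}_q$.

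First I would fix an arbitrary subset $S$ of $k$ nodes out of $n$, consisting of $s$ systematic and $t = k-s \leq r$ parity nodes. The $s\alpha$ symbols from the systematic part of $S$ are read directly, so reconstruction of the whole file reduces to recovering the remaining $t\alpha$ systematic symbols from the $t\alpha$ parity equations contributed by the $t$ chosen parity nodes. By \eqref{LinEquations}, each parity equation is a linear combination whose coefficients are precisely a subset of the $c_{l,i,j}$'s, so the associated $t\alpha \times t\alpha$ reconstruction matrix $A_S$ has entries of degree at most $1$ in the indeterminates, and reconstruction from $S$ is possible if and only if $\det(A_S) \neq 0$. Viewed as a polynomial in the $c_{l,i,j}$'s, $\det(A_S)$ has total degree at most $t\alpha \leq r\alpha$.

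Next I would form the product polynomial $F = \prod_{S} \det(A_S)$, where $S$ ranges over the $\binom{n}{k}$ $k$-subsets. Its total degree is at most $\binom{n}{k}\, r\alpha$. Provided $F$ is not identically zero, the Schwartz--Zippel lemma guarantees an assignment of values in $\mathbf{F}_q$ to the $c_{l,i,j}$ at which $F$ does not vanish, as soon as $q \geq \binom{n}{k}\, r\alpha$; at such an assignment every $\det(A_S)$ is simultaneously non-zero, which is exactly the MDS property. The ``non-zero coefficient'' requirement is imposed by further intersecting with the Zariski-open set $\{\prod c_{l,i,j} \neq 0\}$, which remains non-empty for the stated size of $q$ since multiplying $F$ by this monomial only contributes lower-order slack absorbed by the same bound up to the standard union-bound argument used in \cite{7084873}.

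The main obstacle is the structural claim that each $\det(A_S)$, and therefore $F$, is not identically zero as a polynomial in the $c_{l,i,j}$. I would resolve this exactly as in \cite{7084873}: exhibit a single ``generic'' specialization of the coefficients, for instance a Vandermonde/Cauchy-type assignment indexed by the parity level $l$ and the row position $i$, and use the structural guarantees of Algorithm~\ref{Constr} on the index arrays $P_1,\dots,P_r$ (in particular, that within any group $J_\nu$ the valid partitioning satisfies Conditions~1 and 2, so the symbol supports appearing in different parity rows are sufficiently disjoint) to argue that $A_S$ is non-singular at this specialization for every choice of $S$. Because the $\det(A_S)$ are polynomials that are non-zero at an explicit point, they are not identically zero, $F \not\equiv 0$, and the Schwartz--Zippel step then delivers the required coefficients over any $\mathbf{F}_q$ with $q \geq \binom{n}{k}\, r\alpha$.
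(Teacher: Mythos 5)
Your proposal is correct and follows essentially the same route as the paper: both reduce the MDS property to non-singularity of the square reconstruction system that arises when up to $r$ systematic nodes are missing, and both obtain the field-size bound $q \geq \binom{n}{k} r \alpha$ via the determinant-product/Schwartz--Zippel argument of Theorem 4.1 in \cite{7084873}. Your write-up is in fact more explicit than the paper's (which defers the degree counting and non-vanishing entirely to the citation); the one step you rightly flag as the real content, that each $\det(A_S)\not\equiv 0$, can be settled even more directly than by a Vandermonde specialization: setting to zero the coefficients of the terms coming from the appended columns block-diagonalizes $A_S$ into $\alpha$ square matrices whose entries are distinct indeterminates, each of which has non-zero determinant as a polynomial.
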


\begin{proof}
The system of linear equations in (\ref{LinEquations}) defines a system of $r \times \alpha$ linear equations with $k \times \alpha$ variables.
A repair of one failed node is given in Alg. \ref{AlgRepair}, but for the sake of this proof, we explain the repair by discussing the solutions of the system of equations. When one node has failed, we have an overdefined system of $r \times \alpha$ linear equations with $\alpha$ unknowns.
In general this can lead to a situation where there is no solution.
However, since the values in system (\ref{LinEquations}) are obtained from the values of the lost node, we know that there exists one solution. Thus, solving this system of $r \times \alpha$ linear equations with an overwhelming probability gives a unique solution, i.e., the lost node is recovered.
When 2 nodes have failed, we have a system of $r \times \alpha$ linear equations with $2\alpha$ unknowns.
The same discussion for the overdefined system applies here.
The most important case is when $r=n-k$ nodes have failed.
In this case, we have a system of $r \times \alpha$ linear equations with $r \times \alpha$ unknowns.
If the size of the finite field $\mathbf{F}_q$ is large enough, i.e., $q \geq \binom {n} {k} r \alpha$, as it is shown in Theorem 4.1 in \cite{7084873}, the system has a unique solution, i.e., the file $M$ can be collected from any $k$ nodes.
\end{proof}

\begin{figure}
\centering
\includegraphics[width=3.5in]{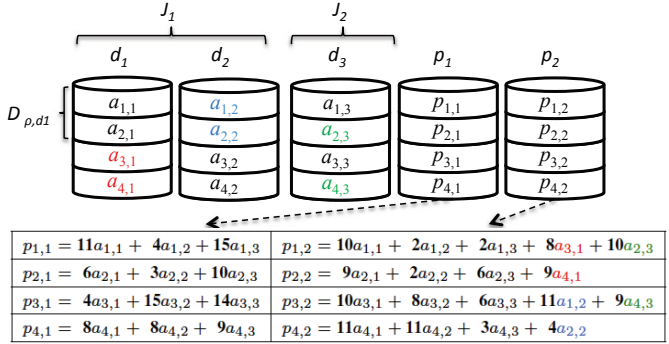}
\vspace{-0.5cm}
\caption{An MDS array code with 3 systematic and 2 parity nodes for $\alpha=4$. The elements presented in colors are scheduled as additional elements in $p_2$.The coefficients are from $\mathbf{F}_{16}$ with irreducible polynomial $x^4+x^3+1$. }
\label{five}
\end{figure}


\section{Code Examples} \label{example}

Let us take the $(5, 3, 4)$ code. We show a code construction for the optimal sub-packetization level $\alpha=2^{\lceil \frac{3}{2} \rceil}=4$. 

The following requirements have to be satisfied for the code to be an access-optimal MDS code that achieves the lower bound of the repair bandwidth for any systematic node:
\begin{itemize}
\item $M=k \alpha = 12$ symbols,
\item Repair a failed systematic node by accessing and transferring $\lceil \frac{\alpha}{r}\rceil=2$ symbols from the remaining $d=4$ nodes,
\item Reconstruct the data from any 3 nodes.
\end{itemize}

The systematic nodes $d_1, d_2, d_3$ and the parity nodes $p_1, p_2$ are shown in Fig. \ref{five}.
The file size is 12 symbols, where each node stores $\alpha=4$ symbols.
The elements of $p_1$ are linear combinations of the row elements from the systematic nodes multiplied by coefficients from $\mathbf{F}_{16}$.
The elements of $p_2$ are obtained by adding extra symbols to the row sum.
We next show the scheduling of an element $a_{i,j}$ from a specific $d_j$ where $i \in \mathcal{D} \setminus D_{\rho,d_j}$ at $portion=2$ positions in the $i$-th row, $i\in D_{\rho,d_j}$, and the $(3+\nu)$-th column, $\nu=1, 2$, of $P_2$.
We follow the steps in Alg. \ref{Constr} and give a brief description:
\newline
1. Initialize $P_1$ and $P_2$ as arrays $P = ((i,j))_{4 \times 3}$.
\newline
2. Append additional $2$ columns to $P_2$ initialized to $(0, 0)$.
\newline
3. Set $portion=2$ and $ValidPartitions=\emptyset$ .
\newline
4. For the nodes $d_1, d_2$ that belong to $J_1$, $run$ is equal to 2 and $step$ to 0. While $run$ is equal to 1 and $step$ to 1 for the node $d_3$ that belongs to $J_2$.
\newline
5. Alg. \ref{Valid} gives $D_{d_1}=\{\{1, 2\}, \{3, 4\}\}$.
Following step 14 in Alg. \ref{Constr}, the first 2 zero pairs in the $5$-th column of $P_2$ with $0$ distance between them are at the positions $(i,5)$ where $i=1,2$. Thus, $D_{\rho,d_1}=\{ 1, 2\}$.
We follow the same logic to obtain $D_{\rho,d_2}=\{3, 4\}$ and $D_{\rho,d_3}=\{1, 3\}$.
Next we schedule the elements of $d_j$ with $i$ indexes that are not elements of $D_{\rho,d_j}$ (represented in colors in Fig. \ref{five}) in the $i-$th row of $P_2$ where $i \in D_{\rho,d_j}$.
The $i-$th index of $a_{3,1}$ and $a_{4,1}$ does not belong to $D_{\rho,d_1}$ so these elements are scheduled at the positions $(1,5)$ and $(2,5)$ of $P_2$.
We add the elements $a_{1,2}, a_{2,2}$ from $d_2$ in the 3-rd and the 4-th row respectively, while we add the elements $a_{2,3}$ and $a_{4,3}$ from $d_3$ in the 1-st and 3-rd row respectively.
The symbols in the parity nodes are obtained as linear combinations of the row elements in the parity arrays. For instance, $p_{1,2}$ is a linear combination of the elements in the first row from all systematic nodes, $a_{3,1}$ and $a_{2,3}$.


We next show how to repair the node $d_1$ following Alg. \ref{AlgRepair}. First, we repair the elements $a_{1,1}, a_{2,1}$.
Thus, we access and transfer $a_{1,2}$, $a_{2,2}$, $a_{1,3}$ and $a_{2,3}$ from $d_2$ and $d_3$ and $p_{1,1}$, $p_{2,1}$ from $p_1$.
In order to recover $a_{3,1}, a_{4,1}$, we need to access and transfer $p_{1,2}$ and $p_{2,2}$ from $p_2$.
Hence, the data from $d_1$ is recovered by accessing and transferring in total 8 elements from 4 non-failed nodes.
Exactly the same amount of data, 8 symbols, is needed to repair $d_2$ or $d_3$.
Thus, the average repair bandwidth, defined as the ratio of the total repair bandwidth to repair all systematic nodes to the file size $M$, is equal to 2 symbols.
\subsection{Performance Analysis}
Another code discussed in this section is $(14, 10, 13)$ with different $\alpha$. Fig. \ref{14_10} shows the relation between the average repair bandwidth and $\alpha$.
For an RS code, $\alpha$ is 1 and the average repair bandwidth is equal to the file size.
A Hitchhiker code \cite{Rashmi:2014:HGF:2619239.2626325} for $\alpha=2$ reduces the repair bandwidth by 35$\%$ compared to the RS code. The remaining values of the average repair bandwidth are for the codes constructed with the algorithms presented in Section \ref{general}. We observe that the lower bound of the repair bandwidth that is 3.25 is achieved for $\alpha=r^{\lceil \frac{k}{r} \rceil}=64$. As we can see the repair bandwidth decreases as $\alpha$ increases.

\begin{figure}
\centering
\includegraphics[width=3.27in]{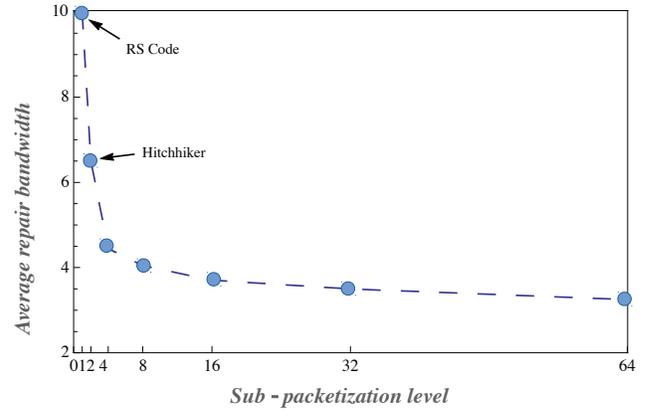}
\vspace{-0.3cm}
\caption{Average repair bandwidth for any systematic node for different sub-packetization levels $\alpha$ for an (14, 10, 13) code}
\label{14_10}
\end{figure}
\vspace{-0.1cm}
\section{Conclusions}
We presented a general construction of access-optimal regenerating codes that reach the lower bound of the repair bandwidth for $\alpha=r^{\lceil \frac{k}{r} \rceil}$, while the repair bandwidth is as close as possible to the lower bound when $\alpha < r^{\lceil \frac{k}{r} \rceil}$. The repair process of a systematic node is linear and highly parallelized.
\bibliographystyle{IEEEtran}
\bibliography{refer}

\end{document}